\newcommand{\R}{\mathbb{R}}
\newcommand{\N}{\mathbb{N}}
\newcommand{\I}{\mathbb{I}}
\newtheorem{theorem}{Theorem}
\newtheorem{lemma}[theorem]{Lemma}
\newtheorem{proposition}[theorem]{Proposition}
\theoremstyle{definition}
\newtheorem{definition}{Definition}
\theoremstyle{remark}
\newtheorem{remark}{Remark}
\begin{document}

\title[Dynamic Triggering Mechanisms]{Dynamic Triggering Mechanisms for \\ Event-Triggered Control\\ (Full version)}


\author[Antoine Girard]{Antoine Girard}
\address{Laboratoire Jean Kuntzmann \\
Universit\'e de Grenoble \\
B.P. 53, 38041 Grenoble, France} \email{antoine.girard@imag.fr}

\thanks{This work was supported by the Agence Nationale de la Recherche (COMPACS project - ANR-13-BS03-0004).}


\maketitle


\begin{abstract}
In this paper, we present a new class of event triggering mechanisms for event-triggered control systems. This class is characterized by the introduction 
of an internal dynamic variable, which motivates the proposed name of dynamic event triggering mechanism. The stability of the resulting closed loop system is proved and the influence of design parameters on the decay rate of the Lyapunov function is discussed. For linear systems,
we  establish a lower bound on the inter-execution time as a function of the parameters. 
The influence of these parameters on a quadratic integral performance index is also studied.
Some simulation results are provided for illustration of the theoretical claims. 
\end{abstract}

\section{Introduction}

Cyber-physical systems often involve several control loops with shared computational and communication resources. 
Efficient usage of these resources is therefore a central issue in cyber-physical systems design. Traditional digital control techniques often assume that
controllers execute periodically, independently from the state of the system. This time-triggered paradigm may result in unnecessary high workloads when
computational and communication resources may be more usefully assigned to some other tasks.
These limitations have resulted in a recent regain of interest for event-triggered control (see e.g.~\cite{HJT2012} and the references therein).
In event-triggered control systems, the inputs of a system are not updated periodically but only when some events occur.   
Most of the recent effort has been devoted to the development
of systematic techniques for the design of event triggering mechanism (ETM) that can be used for the implementation of a given stabilizing feedback controller.
The most commonly used ETM typically consists of a static rule given on the state of the system as in~\cite{tabuada2007}.

In this paper, we present a new class of ETM that use an additional internal dynamic variable, which motivates the name of dynamic ETM. 
The use of internal dynamic variables in ETM can be found in several works such as~\cite{wang2008,mazo2010,seuret2011} where the proposed mechanisms
are equipped with internal clocks, or in~\cite{postoyan2011,postoyan2011a} where some of the proposed mechanisms uses an internal  dynamic variable which can also be seen as a clock (it is monotonic) whose growth rate depends on the state of the system.
In the present work, the internal variable is actually a filtered version of the signal used to trigger events in~\cite{tabuada2007}, it is generally non-monotone.

We consider the framework introduced in~\cite{tabuada2007} (same class of systems, same assumptions). 
The paper is organized as follows. In section~\ref{sec:2}, 
we introduce the class of dynamic  ETM in the general framework of nonlinear control systems.
We prove the asymptotic stability of the closed loop system. 
The influence of design parameters on the decay rate of the Lyapunov function is discussed and we show
that the guaranteed lower bound on inter-execution times (i.e. the time between two input updates)
using a dynamic ETM cannot be smaller than that obtained for the static ETM presented in~\cite{tabuada2007}. 
In section~\ref{sec:3}, we specialize our framework to the case of linear systems. 
We  establish a lower bound on the inter-execution time as a function of the parameters. 
The influence of these parameters on a quadratic integral performance index is also studied.
Finally, in section~\ref{sec:example}, we provide some simulation results
 for illustration of the theoretical developments. 

\paragraph*{Notation} $\R_0^+$ denotes the set of non-negative real numbers.
A function $\alpha: \R_0^+ \rightarrow \R_0^+$ is said to be of class $\mathcal K$ if it is continuous, strictly increasing and $\alpha(0)=0$;
if in addition $\alpha(r) \rightarrow +\infty$ as $r\rightarrow +\infty$, $\alpha$ is said to be of class $\mathcal K_\infty$. A function $\beta: \R_0^+\times \R_0^+ \rightarrow \R_0^+$ is said to be of class $\mathcal{KL}$ if it is continuous,
 $\beta(.,s) \in \mathcal K$ for all $s\in  \R^+$, and for all $r \in \R_0^+$,   $\beta(r,.)$ is strictly decreasing and $\beta(r,s)\rightarrow 0$ as $s\rightarrow +\infty$.
Given a function $g:\R_0^+ \rightarrow \R^n$, for all $t>0$, we denote by $g(t^-)$ the limit of $g(s)$ when $s$ approaches $t$ from the left.
Let us remark that if $g$ is continuous at $t$ then $g(t^-)=g(t)$.
 A function $f:\R^n\rightarrow \R^m$ is locally Lipschitz continuous, if for $x\in \R^n$, there exists a neighborhood of $x$, $S\subset \R^n$ and a constant $L>0$ such that
 $\|f(x_1)-f(x_2)\|\le L \|x_1-x_2\|$ for all $x_1,x_2\in S$; it is  
 Lipschitz continuous on compacts if for every compact set $S\subset \R^n$ there exists a constant $L>0$ such that $\|f(x_1)-f(x_2)\|\le L \|x_1-x_2\|$ for all $x_1,x_2\in S$.
\section{Triggering Mechanisms for Event-Triggered Control Systems}
\label{sec:2}

We consider the framework introduced in~\cite{tabuada2007} and deal with a control system of the form:
\begin{equation}
\label{eq:sys}
\dot x = f(x,u),\; x\in \R^n,\; u\in \R^m.
\end{equation}
It is assumed in the following that a feedback controller $k:\R^n \rightarrow \R^m$ has been designed such that the closed loop system
\begin{equation}
\label{eq:clsys}
\dot x = f(x,k(x+e))
\end{equation}
is input-to-state stable (ISS) with respect to measurement errors $e\in \R^n$.
As in~\cite{tabuada2007}, we assume that we know an ISS Lyapunov function for (\ref{eq:clsys}):
\begin{definition} A smooth function $V:\R^n\rightarrow \R^+_0$ is said to be an ISS-Lyapunov function for system (\ref{eq:clsys})
if there exist class $\mathcal K_\infty$ functions $\overline \alpha$, $\underline \alpha$, $\alpha$ and $\gamma$ satisfying
for all $x,e\in \R^n$ 
$$
\underline \alpha (\|x\|) \le V(x) \le \overline \alpha(\|x\|),
$$
$$
\nabla V(x) \cdot f(x,k(x+e)) \le -\alpha(\|x\|) + \gamma(\|e\|).
$$
\end{definition} 
We assume that the controller is implemented on a digital platform so that the actual control input of (\ref{eq:sys}) is given by
$$
u(t)=u(t_i)=k(x(t_i)), \; \forall t\in [t_i,t_{i+1}), \; i\in \I
$$
where the elements of the increasing sequence $(t_i)_{i\in \I}$ are the {\it execution times} at which the control input is computed and  updated. If there is an infinite number of executions, then $\I=\N$ and we denote $t_\infty$ as the limit of $t_i$ when $i\rightarrow +\infty$. 
If there is a finite number $I\in \N$ of executions, then $\I=\{0,1,\dots,I\}$ and we define $t_\infty=t_{I+1}=+\infty$.
Defining 
\begin{equation}
\label{eq:error}
e(t)=x(t_i)-x(t), \; \forall t\in [t_i,t_{i+1}), \; i\in \I
\end{equation}
the closed loop system is of the form (\ref{eq:clsys}).
In event-triggered control systems, the execution times are triggered by events that are generated according to a rule on the state of the system.
This rule is called the {\it event triggering mechanism} (ETM). 

\subsection{Static event triggering mechanisms}
In~\cite{tabuada2007}, it is proposed to generate an event whenever $\gamma(\|e(t^-)\|)=\sigma \alpha(\|x(t)\|)$ where
$\sigma\in (0,1)$ is a parameter.
Then,  the sequence $(t_i)_{i\in \I}$ is formally defined by
\begin{equation}
\label{eq:static}
\begin{array}{l}
t_0=0,\\
t_{i+1} = \inf\left\{\begin{array}{l} t\in \R |\; t>t_i \land \\
\sigma \alpha(\|x(t)\|)-\gamma(\|e(t^-)\|) \le 0  \end{array} \right\}
\end{array}
\end{equation}
It can be shown that using this ETM, the value of $\sigma \alpha(\|x\|) - \gamma(\|e\|)$ remains non-negative for all time. Then, it holds 
$$
\frac{d}{dt}V(x(t)) \le (\sigma-1)\alpha(\|x(t)\|),\; \forall t\in[0,t_\infty).
$$
which guarantees that $x(t)$ converges asymptotically to the origin provided $t_\infty=+\infty$.
In addition, if $\alpha \circ \overline\alpha^{-1}$ is locally Lipschitz continuous, then
it is straightforward to show by the Comparison Lemma (see e.g.~\cite[pages 102-103]{khalil}) that 
\begin{equation}
\label{eq:decaystatic}
V(x(t))  \le  \phi \left(V (x(0)),(1-\sigma)t \right),\; \forall t\in[0,t_\infty).
\end{equation}
where $\phi$ is the solution of the scalar differential equation
$\dot \phi(r,t) = - \alpha\circ \overline\alpha^{-1} ( \phi(r,t))$ with $\phi(r,0)=r$ (note that $\phi$ is a $\mathcal{KL}$ function by Lemma 4.4 in~\cite[page 145]{khalil})).
Hence, the decay rate of the Lyapunov function $V(x(t))$ can be adjusted using the parameter $\sigma$:
when $\sigma$ approaches $0$, the decay rate approaches that of the ``ideal'' system (when $e(t)=0$ for all $t\in \R_0^+$).

An important question in event triggered control is the existence of a {\it minimal inter-execution time}, which is some bound $\tau>0$ such that the sequence $(t_i)_{i\in \I}$ satisfies 
\begin{equation}
\label{eq:tau}
t_{i+1}-t_i \ge \tau, \; \forall i\in \I.
\end{equation}
Indeed, let us remark that if there exists $\tau$ such that (\ref{eq:tau}) holds then $t_\infty=+\infty$ and the stability of the closed loop system is proved. Moreover, this lower bound between two successive execution times provides us some requirements on digital platforms
on which the controller can be implemented.
In~\cite{tabuada2007}, under the assumption that  $f$, $k$, $\alpha^{-1}$ and $\gamma$ are Lipschitz continuous on compacts,
it is shown that for all $\sigma\in (0,1)$, for all compact set $S\subset \R^n$ containing the origin, there exists $\tau>0$ such that for all initial condition $x(0)\in S$, the sequence
$(t_i)_{i\in \I}$ defined by $(\ref{eq:static})$ satisfies (\ref{eq:tau}).

We call the ETM  $(\ref{eq:static})$, {\it static} because it only involves the current value of $x$ and $e$.
In the following, we propose {\it dynamic} ETM, that use an additional internal dynamic variable.

\subsection{Dynamic event triggering mechanisms}

We propose to enrich our ETM with an internal dynamic variable $\eta$ satisfying the following differential equation:
\begin{equation}
\label{eq:s}
\dot \eta = -\beta(\eta) + \sigma \alpha(\|x\|) - \gamma(\|e\|), \; \eta(0)=\eta_0.
\end{equation}
where the locally Lipschtiz continuous $\mathcal K_\infty$ function $\beta$ and the reals $\sigma\in (0,1)$ and $\eta_0 \in \R_0^+$ are design parameters.
Intuitively, $\eta$ can be regarded as a filtered value of $\sigma \alpha(\|x\|) - \gamma(\|e\|)$, where the filter (\ref{eq:s}) is possibly nonlinear if the function $\beta$ is nonlinear.
The main intuition behind the proposed dynamic ETM is that for stability of the closed loop system, it is not needed that  $\sigma \alpha(\|x\|) - \gamma(\|e\|)$ is always non-negative and it is sufficient that it is non-negative in average.
This can be ensured by triggering events in such a way that $\eta$ remains non-negative for all time. 
Then, let us consider the ETM defined by the following rule:
\begin{equation}
\label{eq:dynamic}
\begin{array}{l}
t_0=0,\\
t_{i+1} = \inf\left\{\begin{array}{l} t\in \R |\; t>t_i \land \\ 
\eta(t)+\theta(\sigma \alpha(\|x(t)\|)-\gamma(\|e(t^-)\|)) \le 0  \end{array} \right\}
\end{array}
\end{equation}
where $\theta \in \R_0^+$ is an additional design parameter.
Let us remark that the static ETM (\ref{eq:static}) can be seen as a limit case of the dynamic ETM (\ref{eq:dynamic}) when $\theta$ goes to $+\infty$.

In the following, we assume that for all $i\in \I$, $x(t_i)\ne 0$ (otherwise finite-time stabilization is obtained).
Let us remark that such an assumption is implicitly made in~\cite{tabuada2007} where stability of the closed loop system is proved by analyzing the evolution of the ratio ${\|e(t)\|}/{\|x(t)\|}$. The following lemma states that  $\eta$ remains non-negative for all time. 

\begin{lemma}
\label{lem:pos}
Let $\beta$ be a locally Lipschtiz continuous $\mathcal K_\infty$ function, $\sigma\in (0,1)$ and $\eta_0, \theta\in \R_0^+$,
let $x$, $e$, $\eta$ be given by (\ref{eq:clsys}), (\ref{eq:error}), (\ref{eq:s}) and (\ref{eq:dynamic}). Then,
for all $t\in [0,t_\infty)$,
$
\eta(t)+\theta(\sigma \alpha(\|x(t)\|)-\gamma(\|e(t)\|)) \ge 0 \text{ and } \eta(t)\ge 0.
$
\end{lemma}

\begin{proof} By construction, the ETM (\ref{eq:dynamic}) ensures that for all $t\in [0,t_\infty)$,
$$
\eta(t)+\theta(\sigma \alpha(\|x(t)\|)-\gamma(\|e(t^-)\|)) \ge 0. 
$$
By remarking that for all $t\in [0,t_\infty)$, $\|e(t)\| \le \|e(t^-)\|$, we obtain the first inequality. 
If $\theta=0$, the second inequality is equivalent to the first one. Then, let us assume that $\theta\ne 0$. 
The first inequality gives us 
$$
\sigma \alpha(\|x(t)\|)-\gamma(\|e(t)\|) \ge -\frac{1}{\theta} \eta(t).
$$
Then, from (\ref{eq:s}), we have that for all $t\in [0,t_\infty)$,
$$
\dot \eta(t) \ge -\beta(\eta(t))-\frac{1}{\theta} \eta(t), \; \eta(0)\ge 0.
$$
Then by the Comparison Lemma, it follows that $\eta(t)\ge 0$, for all $t\in [0,t_\infty)$.
\end{proof}

\subsubsection{Stability analysis}

To show the asymptotic stability of the closed loop system, we consider the following candidate Lyapunov function $W:\R^n\times \R_0^+: \rightarrow \R^+_0$ for the augmented dynamical system given by (\ref{eq:clsys}) and (\ref{eq:s}):
$$
W(x,\eta)=V(x)+\eta.
$$
It is clear that $W$ is positive definite and radially unbounded. 
Moreover, for all $(x,\eta)\in \R^n\times \R_0^+$, we have
$W(x,\eta) \ge V(x)$. Also, for all  $t\in [0,t_\infty)$, we have
\begin{eqnarray}
\nonumber
\frac{d}{dt}W(x(t),\eta(t))& \le &  -\alpha(\|x(t)\|) + \gamma(\|e(t)\|)   + \dot \eta(t) \\
\label{eq:diffW}
             & \le & (\sigma-1)\alpha(\|x(t)\|) - \beta(\eta(t))
\end{eqnarray}
which guarantees that $W(x(t),\eta(t))$ decreases and that $x(t)$ and $\eta(t)$ converge asymptotically to the origin provided that $t_\infty=+\infty$.

The following proposition shows that, for a given state of the system, the next execution time given by
a dynamic ETM is larger than that given by a static ETM. 
\begin{proposition}
\label{pro:time1}
Let $\beta$ be a locally Lipschtiz continuous $\mathcal K_\infty$ function, $\sigma\in (0,1)$ and $\eta_0, \theta\in \R_0^+$,
let $i\in \I$, $t_i\in \R_0^+$, $x(t_i) \in \R^n$ and $\eta(t_i)\ge 0$, let $t_{i+1}^s$ be given by the rule (\ref{eq:static}), let 
$t_{i+1}^d$ be given by the rule (\ref{eq:dynamic}), then $t_{i+1}^s\le t_{i+1}^d$.
\end{proposition}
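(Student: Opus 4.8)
The plan is to compare the two trigger conditions along a \emph{common} trajectory. First I would observe that on the time interval following $t_i$ and up to the first execution, the closed-loop state $x(t)$ and the measurement error $e(t)$ evolve identically regardless of which event generator is used, since in both cases the control input is frozen at $k(x(t_i))$ on $(t_i,t_{i+1})$. Hence it suffices to fix this common trajectory and compare the first instant at which each rule fires. To lighten notation I would set $\psi(t)=\sigma\alpha(\|x(t)\|)-\gamma(\|e(t^-)\|)$, so that $t_{i+1}^s=\inf\{t>t_i:\psi(t)\le 0\}$ and $t_{i+1}^d=\inf\{t>t_i:\eta(t)+\theta\psi(t)\le 0\}$. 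By the definition of $t_{i+1}^s$ we have $\psi(t)>0$ for every $t\in(t_i,t_{i+1}^s)$, and the claim reduces to showing that the dynamic firing condition is never met on this open interval, which immediately yields $t_{i+1}^d\ge t_{i+1}^s$.

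For $\theta>0$ this is a one-line set-inclusion argument: if $\eta(t)+\theta\psi(t)\le 0$ then $\theta\psi(t)\le -\eta(t)\le 0$ by Lemma~\ref{lem:pos}, hence $\psi(t)\le 0$. Thus the set of times at which the dynamic rule can fire is contained in the set at which the static rule fires, and passing to infima gives $t_{i+1}^d\ge t_{i+1}^s$; no assumption on whether the infima are attained is needed.

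The boundary case $\theta=0$ is the one requiring care, since the inclusion above degenerates: the dynamic condition becomes $\eta(t)\le 0$, which by Lemma~\ref{lem:pos} can hold only where $\eta(t)=0$, and this is not obviously contained in $\{\psi\le 0\}$. Here I would argue directly from the dynamics (\ref{eq:s}), which $\eta$ obeys throughout $(t_i,t_{i+1}^d)$. At the execution time the error resets, $e(t_i)=0$, so just after $t_i$ one has $\dot\eta(t_i^+)=-\beta(\eta(t_i))+\sigma\alpha(\|x(t_i)\|)>0$ whenever $\eta(t_i)=0$ (using $\beta(0)=0$ and $x(t_i)\ne 0$), which pushes $\eta$ strictly positive immediately after $t_i$. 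More generally, at any instant $t\in(t_i,t_{i+1}^s)$ where $\eta(t)=0$ one has $\dot\eta(t)=-\beta(0)+\psi(t)=\psi(t)>0$, so $\eta$ cannot decrease through zero on this interval. A standard first-zero contradiction then shows $\eta(t)>0$ throughout $(t_i,t_{i+1}^s)$, so the dynamic rule does not fire there and again $t_{i+1}^d\ge t_{i+1}^s$.

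The main obstacle is precisely this $\theta=0$ case: for $\theta>0$ the inequality is essentially algebraic and follows at once from the positivity of $\eta$ established in Lemma~\ref{lem:pos}, whereas for $\theta=0$ the conclusion genuinely relies on the sign of $\dot\eta$ at the zeros of $\eta$, that is, on the structure of the internal dynamics together with $\beta(0)=0$ and $\psi>0$ before the static trigger. One minor point to handle cleanly is the distinction between $e(t)$ and its left limit $e(t^-)$; since $e$ is continuous on the open interval $(t_i,t_{i+1})$ the two coincide there, so the value of $\psi$ entering both the trigger rule and the differential equation (\ref{eq:s}) is unambiguous on the relevant interval.
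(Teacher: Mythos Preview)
Your proof is correct and follows essentially the same approach as the paper: split into the cases $\theta>0$ and $\theta=0$, use the nonnegativity of $\eta$ (Lemma~\ref{lem:pos}) to reduce the first case to an algebraic implication, and for $\theta=0$ use the sign of $\dot\eta$ at a zero of $\eta$ together with $\psi>0$ before the static trigger. The paper phrases both cases as a contradiction at the single point $t_{i+1}^d$ (assuming $t_{i+1}^d<t_{i+1}^s$, then showing the dynamic condition cannot hold there), whereas you argue by set inclusion and a first-zero argument; these are the same proof up to cosmetic reorganization, and your version is a bit more explicit about the shared trajectory and the $e(t)$ versus $e(t^-)$ issue.
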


\begin{proof} Let us assume that  $t_{i+1}^s > t_{i+1}^d$. Then, by (\ref{eq:static}), we must have
\begin{equation}
\label{eq:pro1}
\sigma \alpha(\|x(t_{i+1}^d)\|) - \gamma(\|e(t_{i+1}^{d-})\|)>0.
\end{equation}
We will now consider two different cases. If $\theta>0$, then we must have by (\ref{eq:dynamic}) and Lemma~\ref{lem:pos}
\begin{eqnarray*}
0 &\ge&  \eta(t_{i+1}^d)+\theta(\sigma \alpha(\|x(t_{i+1}^d)\|)-\gamma(\|e(t_{i+1}^{d-})\|)) \\
&\ge& \theta(\sigma \alpha(\|x(t_{i+1}^d)\|)-\gamma(\|e(t_{i+1}^{d-})\|))
\end{eqnarray*}
This contradicts (\ref{eq:pro1}). If $\theta=0$, then the triggering condition defined by (\ref{eq:dynamic}) gives
$\eta(t_{i+1}^d)=0$ and $\dot \eta(t_{i+1}^{d-}) \le 0$. Then, (\ref{eq:s}) gives
$$
0 \ge \dot \eta(t_{i+1}^{d-})= \sigma \alpha(\|x(t_{i+1}^d)\|)-\gamma(\|e(t_{i+1}^{d-})\|)
$$
which contradicts again (\ref{eq:pro1}). Hence, $t_{i+1}^s \le t_{i+1}^d$.
\end{proof}

\begin{remark}
\label{remark:theta}
The previous proposition has to be considered carefully, since it only shows that for a given state $x(t_i)$, the next execution time will be larger if we use a dynamic ETM (\ref{eq:dynamic}) rather than a static one (\ref{eq:static}). However, we cannot say anything on further execution times as generally we  have $x(t_{i+1}^s) \ne x(t_{i+1}^d)$ and thus we cannot apply the proposition again.
However, the proposition allows us to conclude that the minimum inter-execution time for the dynamic ETM (\ref{eq:dynamic}) cannot be smaller
than that for the static ETM  (\ref{eq:static}). 
Similarly, it can be shown that a smaller value of parameter $\theta$ results in a larger value of the minimum inter-execution time (the largest value being obtained for $\theta=0$).
\end{remark}

We can now state the following result on the stability of the closed loop system. 
\begin{theorem}\label{th:stab} Let us assume that $f$, $k$, $\alpha^{-1}$ and $\gamma$ are Lipschitz continuous on compacts.
Then, for all locally Lipschtiz continuous $\mathcal K_\infty$ functions $\beta$, $\sigma\in (0,1)$ and $\eta_0,\; \theta\in \R_0^+$, for all compact sets $S\subset \R^n$ containing the origin, there exists $\tau>0$ such that for all initial conditions $x(0)\in S$, the sequence $(t_i)_{i\in \I}$ defined by $(\ref{eq:dynamic})$ statisfies (\ref{eq:tau}). Moreover, $x(t)$ and $\eta(t)$ converge asymptotically to the origin.
\end{theorem}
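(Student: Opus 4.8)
The plan is to reduce the statement to the static case of~\cite{tabuada2007} by pairing the Lyapunov inequality (\ref{eq:diffW}) with the comparison Proposition~\ref{pro:time1}; since (\ref{eq:diffW}) already carries the analytic weight, what remains is a boundedness argument together with a stepwise transfer of the static minimal inter-execution time. First I would show that the state stays in a compact set depending only on $S$. By Lemma~\ref{lem:pos}, $\eta(t)\ge 0$, so $W(x(t),\eta(t))\ge V(x(t))$; combined with $\frac{d}{dt}W\le 0$ from (\ref{eq:diffW}) on $[0,t_\infty)$ and $\eta(0)=0$, this gives
$$
\underline\alpha(\|x(t)\|)\le V(x(t))\le W(x(t),\eta(t))\le W(x(0),0)=V(x(0))\le\overline\alpha(R),
$$
where $R=\max_{y\in S}\|y\|$. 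Hence $\|x(t)\|\le\underline\alpha^{-1}(\overline\alpha(R))$ for all $t\in[0,t_\infty)$, so $x$ remains in a compact set $S'\subset\R^n$ that depends only on $S$, and in particular $x(t_i)\in S'$ for every $i\in\I$.

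Next I would invoke the static result recalled after (\ref{eq:static}): under the Lipschitz hypotheses on $f$, $k$, $\alpha^{-1}$ and $\gamma$, there is a $\tau>0$ depending only on $S'$ (hence only on $S$) lower-bounding every inter-execution time produced by the static rule (\ref{eq:static}) from initial states in $S'$. To transfer this, I fix $i\in\I$ and note that on the interval starting at $t_i$ both rules see the same state evolution, namely the solution of (\ref{eq:clsys}) with $x(t_i)\in S'$ and reset error $e(t_i)=0$; thus the static rule cannot fire before $t_i+\tau$, i.e. $t_{i+1}^s-t_i\ge\tau$. Since $\eta(t_i)\ge 0$, Proposition~\ref{pro:time1} yields $t_{i+1}^d-t_i\ge t_{i+1}^s-t_i\ge\tau$, and as $i$ is arbitrary the sequence $(t_i)_{i\in\I}$ satisfies (\ref{eq:tau}) with this $\tau$.

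The bound (\ref{eq:tau}) forces $t_\infty=+\infty$: if $\I=\N$ then $t_i\ge i\tau\to+\infty$, and otherwise $t_\infty=+\infty$ by definition. Hence (\ref{eq:diffW}) holds for all $t\ge 0$, giving $\frac{d}{dt}W\le-(1-\sigma)\alpha(\|x(t)\|)-\beta(\eta(t))$, whose right-hand side is negative definite in $(x,\eta)$ on $\R^n\times\R_0^+$ while $W$ is positive definite and radially unbounded; asymptotic convergence of $x(t)$ and $\eta(t)$ to the origin then follows by the standard Lyapunov argument. The only care needed is across the reset times: $x$ and $\eta$ are continuous and, on $S'$, $\dot x$ and $\dot\eta$ are bounded (by continuity of $f$, $k$ on compacts and boundedness of $e$), so the integrand of $\int_0^\infty[(1-\sigma)\alpha(\|x\|)+\beta(\eta)]\,dt<\infty$ is uniformly continuous and Barbalat's lemma closes the argument.

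The main obstacle is the stepwise transfer carried out in the second paragraph: Tabuada's $\tau$ is a uniform lower bound over a compact set, so one must first secure the trajectory-independent containment $x(t_i)\in S'$ and only then apply Proposition~\ref{pro:time1} at each execution time, the required hypothesis $\eta(t_i)\ge 0$ being furnished by Lemma~\ref{lem:pos}; every other ingredient is either already contained in (\ref{eq:diffW}) or a routine Lyapunov/Barbalat conclusion.
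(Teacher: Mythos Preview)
Your proposal is correct and follows essentially the same approach as the paper: confine the state to a compact set via the decrease of $W$ (the paper uses the sublevel set $\{V\le\mu\}$ with $\mu=\max_{S}V$ and argues by induction on $i$, while you obtain the containment for all $t$ directly via $\underline\alpha,\overline\alpha$), then transfer Tabuada's uniform $\tau$ on that compact to the dynamic rule through Proposition~\ref{pro:time1} at each step, and conclude convergence from (\ref{eq:diffW}). Your Barbalat closing is more explicit than the paper's one-line appeal to (\ref{eq:diffW}), but the substance is the same.
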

\begin{proof} The theorem is essentially a consequence of Proposition~\ref{pro:time1} and of Theorem III.1 in~\cite{tabuada2007}.
Let $\mu=\max_{x\in S} V(x)+\eta_0$. Let $R$ be the compact set of all points $x\in \R^n$ such that $V(x)\le \mu$. 
In~\cite{tabuada2007}, it is shown that there exists $\tau>0$ such that if $x(t_i)\in R$ and if $t_{i+1}$ is generated according to rule (\ref{eq:static}),
then $t_{i+1}-t_i \ge \tau$.
Since $x(t_0)\in S$, we have that $W(x(t_0),\eta(t_0))=V(x(t_0))+\eta(t_0)\le \mu$.
Let us assume that for some $i\in \I$, $W(x(t_i),\eta(t_i))\le \mu$. Then, $V(x(t_i)) \le \mu$ and $x(t_i)\in R$. Let $t_{i+1}$ be 
generated according to rule (\ref{eq:dynamic}).
Then, it follows from Proposition~\ref{pro:time1} and Theorem III.1 in~\cite{tabuada2007} that $t_{i+1}-t_i \ge \tau$. Moreover, it follows from (\ref{eq:diffW}) that 
$W(x(t_{i+1}),\eta(t_{i+1}))\le \mu$. Hence, we have shown by induction that $t_{i+1}-t_i \ge \tau$, for all $i\in \I$.
Therefore, $t_\infty=+\infty$. Then,  (\ref{eq:diffW}) allows us to conclude that  $x(t)$ and $\eta(t)$ converge asymptotically to the origin.
\end{proof}

\subsubsection{Choice of parameters}
The proposed dynamic ETM has design parameters: a locally Lipschtiz continuous $\mathcal K_\infty$ function $\beta$ and reals $\sigma\in (0,1)$ and $\eta_0, \theta\in \R_0^+$.
In the following, we provide a discussion to give some insight on how to choose each parameter in order to tune the behavior of the system.
The following proposition shows how the decay rate of the Lyapunov function $V(x(t))$ can be tuned using parameters $\beta$, $\sigma$ and $\eta_0$:
\begin{proposition} 
\label{pro:param}
Let $\sigma \in (0,1)$, $\theta \in \R_0^+$, let $\eta_0= 0$ and let us assume that we can choose a locally Lipschtiz continuous $\beta$ such that 
\begin{equation}
\label{eq:choicebeta}
\forall r_1,r_2 \in \R_0^+,\;  \alpha\circ \overline\alpha^{-1} (r_1+r_2) \le \alpha \circ \overline\alpha^{-1} (r_1)+ \frac{1}{1-\sigma} \beta(r_2).
\end{equation}
If $\alpha \circ \overline\alpha^{-1}$ is locally Lipschitz continuous, then for all initial conditions $x(0)\in \R^n$, 
(\ref{eq:decaystatic}) holds.
\end{proposition}

\begin{proof}
From (\ref{eq:diffW}) and (\ref{eq:choicebeta}), it follows that for all $t\in [0,t_\infty)$
\begin{eqnarray*}
\frac{d}{dt}W(x(t),\eta(t))& \le&  (\sigma-1)\alpha\circ \overline \alpha^{-1}(V(x(t)))  - \beta(\eta(t))\\
& \le&  (\sigma-1)\alpha \circ \overline\alpha^{-1}(W(x(t),\eta(t))).
\end{eqnarray*}
This gives by the Comparison Lemma that   for all $t\in [0,t_\infty)$
\begin{eqnarray}
\nonumber
V(x(t)) & \le & W(x(t),\eta(t)) \le   \phi \left(W (x(0),\eta(0)),(1-\sigma)t \right) \\
\label{eq:decaydynamic2}
& \le & \phi \left(V (x(0)),(1-\sigma)t \right). 
\end{eqnarray}
\end{proof}

Hence, we can see that by choosing $\beta$ such that (\ref{eq:choicebeta}) holds and $\eta_0=0$,  the trajectories of the closed loop systems with static ETM and dynamic ETM have the same guaranteed decay rate given by (\ref{eq:decaystatic})  which can be tuned by choosing appropriately the parameter $\sigma$.
In addition, let us remark that if we $\alpha\circ \overline\alpha^{-1}$ is Lipschitz continuous on compacts, and we are only interested in the dynamics of the system for a compact subset of initial conditions $S\subset \R^n$, it is possible to choose a linear function $\beta$ such that (\ref{eq:choicebeta}) holds for all $r_1,r_2 \in [0,\mu]$ where $\mu=\max_{x\in S} V(x)$. 
Then, we can show that for all initial conditions $x(0)\in S$,
 (\ref{eq:decaystatic}) holds.

It should be noticed that, unlike when using the static ETM (\ref{eq:static}), the function $V(x(t))$ may not be a decreasing function.  
However, we know that it is upper-bounded by the function $W(x(t),\eta(t))$ which is decreasing  according to (\ref{eq:diffW}). 
Moreover, under the assumptions of Proposition \ref{pro:param},
the potential increase of the function $V(x(t))$ can be tuned using the parameter $\theta$. Indeed
if $\theta>0$, Lemma~\ref{lem:pos} and (\ref{eq:decaydynamic2}) give us that for all $t\in [0,t_\infty)$
\begin{eqnarray}
\nonumber
\frac{d}{dt}V(x(t))& \le& -\alpha(\|x(t)\|)+\gamma(\|e(t)\|) \\
\nonumber &\le & (\sigma-1)\alpha(\|x(t)\|) + \frac{1}{\theta} \eta(t) \\
\nonumber
& \le & (\sigma-1)\alpha(\|x(t)\|) \\
\nonumber
&& +\frac{1}{\theta} \left(  \phi \left(V (x(0)),(1-\sigma)t \right) - V(x(t)) \right).
\end{eqnarray}
It appears that $V(x(t))$ can increase only when its value is far from the prescribed decay rate given by (\ref{eq:decaystatic}).
Moreover, the larger the value of $\theta$ the more limited the increase. 
We have seen that the parameters of the dynamic ETM
can be chosen in order to tune the dynamical properties of the function $V(x(t))$.
These parameters also have an influence on the inter-execution times. We have already pointed out in
Remark \ref{remark:theta}, that smaller values of $\theta$ yield a larger minimum inter-execution time: this parameter allows us to adjust the balance between the potential increase of the function $V(x(t))$ and the value of the minimum inter-execution time.

\section{The Case of Linear Systems}
\label{sec:3}

In this section, we specify the results developed above to  linear systems of the form
\begin{equation}
\label{eq:lin}
\dot x= Ax+Bu,\; x\in \R^n, \; u\in \R^m.
\end{equation}
We assume that we are given a linear feedback controller $u=Kx$ such that the ``ideal'' closed loop system
\begin{equation}
\label{eq:clin}
\dot x= Ax+BKx,\; x\in \R^n, 
\end{equation}
is globally asymptotically stable.
This implies the existence of a Lyapunov function $V(x)=x^\top P x$ where $P$ is a symmetric positive definite matrix such that 
\begin{equation}
\label{eq:lyaplin}
(A+BK)^\top P + P (A+BK) = -Q
\end{equation}
where $Q$ is an arbitrary symmetric positive definite matrix. Then, there exist $\chi \ge \kappa>0$ such that $\chi P \ge Q\ge \kappa P$.
When the controller is implemented on a digital platform we have with the notations of the previous section, the dynamics of the closed loop system that is given by
\begin{equation}
\label{eq:clin1}
\dot x= Ax+BK(x+e).
\end{equation}
Then, for all $ t\in [0, t_\infty)$,
\begin{equation}
\label{eq:dlyaplin}
\frac{d}{dt}V(x(t)) = -x(t)^\top Q x(t) + 2 x(t)^\top P BK e(t).
\end{equation}

\subsection{Event triggering mechanisms}

For an event-triggered implementation, one may use, as suggested for instance in~\cite{HJT2012}, the static ETM defined by the following rule:
\begin{equation}
\label{eq:staticlin}
\begin{array}{l}
t_0=0,\\
t_{i+1} = \inf\left\{\begin{array}{l} t\in \R |\; t>t_i \land \\ 
\sigma x(t)^\top Q x(t)-2 x(t)^\top P BK e(t^-) \le 0 \end{array} \right\}
\end{array}
\end{equation}
where $\sigma\in (0,1)$ is a design parameter. Adapting the proof of~\cite{tabuada2007} (see also~\cite{HJT2012}), it is possible to show that 
for all $\sigma \in (0,1)$, 
there exists 
$\tau >0 $ such that  (\ref{eq:tau}) holds. Moreover, it follows that 
\begin{equation}
\label{eq:decaylin}
V(x(t)) \le V(x(0)) e^{(\sigma-1)\kappa t},\;  \forall t\in [0, t_\infty).
\end{equation}
Hence, by choosing $\sigma$ close to $0$, the decay rate of the Lyapunov function $V(x(t))$
approaches that of the ``ideal'' system (\ref{eq:clin}).

We now propose to use a dynamic ETM based on the internal dynamic variable $\eta$ satisfying the following differential equation:
\begin{equation}
\label{eq:s2}
\dot \eta = - \lambda\eta + \sigma x^\top Q x - 2 x^\top P BK e, \; \eta(0)=\eta_0
\end{equation}
where $\sigma\in (0,1)$, $\lambda>0$ and $\eta_0 \in \R_0^+$ are design parameters. 
Let us remark that $\eta$ is just a filtered value of the signal $ \sigma x^\top Q x - 2 x^\top P BK e$.
The dynamic ETM is then defined by the following rule:
\begin{equation}
\label{eq:dynamiclin}
\begin{array}{l}
t_0=0,\\
t_{i+1} = \inf\left\{\begin{array}{l} t\in \R |\; t>t_i \land \eta(t)+\\ 
\theta(\sigma x(t)^\top Q x(t)-2 x(t)^\top P BK e(t^-)) \le 0 \end{array} \right\}
\end{array}
\end{equation}
where $\theta\in \R_0^+$ is an additional design parameter.
The static ETM (\ref{eq:staticlin}) can be seen as a limit case of the dynamic ETM (\ref{eq:dynamiclin}) when $\theta$ goes to $+\infty$.

\begin{remark}
\label{remark:dyn}
For the dynamic ETM (\ref{eq:dynamiclin}),  it is possible to show by adapting the proof of Lemma~\ref{lem:pos} 
that for all $t\in [0,t_\infty)$,
$\eta(t)+\theta(\sigma x(t)^\top Q x(t)-2 x(t)^\top P BK e(t)) \ge 0 \text{ and } \eta(t) \ge 0$.
Also, a similar result to Proposition~\ref{pro:time1} can be shown stating that the next execution time is larger when using the dynamic ETM (\ref{eq:dynamiclin}) rather than the static ETM (\ref{eq:staticlin}). Similarly, it can be shown that a smaller value of parameter $\theta$ results in a larger value of the next execution time.
\end{remark}

Similar to Theorem \ref{th:stab}, we can state the following result on the stability of the closed loop system:
\begin{theorem} For all $\lambda>0$, $\sigma\in (0,1)$ and $\eta_0,\theta\in \R_0^+$, there exists  $\tau>0$ such that for all initial conditions $x(0)\in \R^n$, the sequence $(t_i)_{i\in \I}$ defined by $(\ref{eq:dynamiclin})$ statisfies (\ref{eq:tau}). Moreover, $x(t)$ and $\eta(t)$ converge asymptotically to the origin. 
\end{theorem}

\begin{proof} It is shown in~\cite{tabuada2007} and~\cite{HJT2012} that there exists $\tau>0$ such that for all $i\in\I$, for any value of $x(t_i)\in \R^n$,
the execution time $t_{i+1}$ generated using the rule
(\ref{eq:staticlin}) satisfies $t_{i+1}-t_i>\tau$. Then, it follows from Remark~\ref{remark:dyn} that if $t_{i+1}$ is generated using the rule 
(\ref{eq:dynamiclin}) then we will also have $t_{i+1}-t_i>\tau$. This also implies that $t_\infty=+\infty$.
Now let us consider the candidate Lyapunov function $W:\R^n\times \R_0^+\rightarrow \R_0^+$, for the augmented system given by (\ref{eq:clin1}) and
(\ref{eq:s2}), defined by $W(x,\eta)=V(x)+\eta$. Then,
for all $t\in \R_0^+$,
\begin{equation}
\label{eq:diffWlin}
\frac{d}{dt}W(x(t),\eta(t)) = (\sigma-1)x(t)^\top Q x(t) - \lambda\eta(t)
\end{equation}
which shows that $W(x(t),\eta(t))$ decreases and that $x(t)$ and $\eta(t)$ converge asymptotically to the origin.
\end{proof}

\subsection{Influence of parameters}

The proposed dynamic ETM has several design parameters  $\lambda>0$, $\sigma\in (0,1)$ and $\eta_0,\theta\in \R_0^+$.
In the following, we present a certain number of results that can guide us in choosing values for these parameters. 
The proofs can be found in appendix.

\subsubsection{Minimum inter-execution time}
We first establish a lower bound on the minimum inter-execution time.
\begin{proposition}
\label{pro:tau}
Let $\lambda>0$, $\sigma\in (0,1)$ and $\eta_0,\theta\in \R_0^+$. Then
for all initial conditions $x(0)\in \R^n$, the sequence $(t_i)_{i\in \I}$ defined by $(\ref{eq:dynamiclin})$ satisfies (\ref{eq:tau})
where $\tau>0$ is given by
\begin{enumerate}
\item If $a \le \lambda/2$, 
\begin{equation}
\label{eq:tau1}
\tau = \int_0^1 \frac{1}{a\frac{p}{\sigma q}+(a+b)s+b\frac{\sigma q}{p}s^2} ds
\end{equation}
\item If $a>\lambda/2$, and $\theta\le 1/(2a-\lambda)$,
\begin{equation}
\label{eq:tau2}
\tau = \int_0^1 \frac{1}{a\frac{p}{\sigma q}+(a+b)s+b\frac{\sigma q}{p}s^2+(a-\frac{\lambda}{2})(s^3-s)} ds
\end{equation}
\item If $a>\lambda/2$, and $\theta > 1/(2a-\lambda)$,
\begin{equation}
\label{eq:tau3}
\tau = \int_0^1 \frac{1}{a\frac{p}{\sigma q}+(a+b)s+b\frac{\sigma q}{p}s^2+\frac{1}{2\theta}(s^3-s)} ds
\end{equation}
\end{enumerate} 
 with 
$$
q=\lambda_{\min}(Q),\; p=2\|PBK\|, \; a=\|A+BK\|, \; b=\|BK\|.
$$
\end{proposition}

For the first case, the lower bound (\ref{eq:tau1}) coincides with the lower bound
 on minimum inter-execution times for static ETM given in \cite{tabuada2007}.
We give an intuitive interpretation as follows, if $\lambda/2 \ge a$ then the filter (\ref{eq:s2}) is too fast (time constant $1/\lambda$)
in comparison to filtered signal $\sigma x^\top Q x - 2 x^\top P BK e$ (time constant approximated by $1/(2a)$).
Then, there is essentially no gain  filtering the signal and the dynamic ETM does not guarantee a larger minimum inter-execution time than 
the static ETM.
For the second and third case, it should be noted that lower bounds (\ref{eq:tau2}) and (\ref{eq:tau3}) are strictly larger than that obtained for the static ETM.
It can be seen that our lower bound is a continuous function of $\theta$, constant on $[0,1/(2a-\lambda)]$, and strictly decreasing 
on  $[1/(2a-\lambda),+\infty)$. When $\theta$ goes to $+\infty$ the lower bound given by (\ref{eq:tau3}) tends toward that obtained for the static ETM,
 which is consistent with the fact that the static ETM can be seen as the limit case of the dynamic ETM when $\theta$ goes to $+\infty$.

\subsubsection{Decay rate}

Similar to the case of nonlinear systems, we can show that a suitable choice of parameters
allows us to guarantee the same decay rate of the function $V(x(t))$ as with the static ETM (\ref{eq:staticlin}).
\begin{proposition}
\label{pro:param1}
Let $\sigma\in (0,1)$, $\theta\in \R_0^+$, $\eta_0=0$ and $\lambda=(1-\sigma)\kappa$.
Then,
for all initial conditions $x(0)\in \R^n$, (\ref{eq:decaylin}) holds.
\end{proposition}

Let us remark that the decay rate  can be tuned by choosing appropriately the parameter $\sigma$.
In the previous proposition the value of parameter $\theta$ does not affect the decay rate of the function $V(x(t))$.
To understand the influence of parameter $\theta$ one has to consider a quadratic integral performance index.

\subsubsection{Quadratic integral performance index}

We now examine the influence of parameters on the performance of the closed loop system with respect to the following quadratic integral cost:
$$
J(x(0)) = \int_0^{+\infty} x^\top(t) Q x(t) dt
$$
Such performance criteria have been considered in the context of event-triggered control in~\cite{antunes2012dynamic}.
In the case of the ``ideal'' closed loop system (\ref{eq:clin}), one would have
$J(x(0))=V(x(0))$. 
Using the dynamic ETM, we have the following guarantees:
\begin{proposition}
\label{pro:param2}
Let $\sigma\in (0,1)$, $\theta\in \R_0^+$, $\eta_0=0$ and $\lambda=(1-\sigma)\kappa$.
Then,
for all initial conditions $x(0)\in \R^n$,
\begin{equation}
\label{eq:perfdynamiclin}
J(x(0))  \le   \frac{V (x(0))}{1-\sigma}  \frac{1/\kappa + \theta(1-\sigma)}{1/\chi + \theta (1-\sigma)}
\end{equation}
\end{proposition}

The previous proposition shows that the bound on the proposed performance criteria is increasing with $\theta$.
Hence, the dynamic ETM (\ref{eq:dynamiclin}) degrades the performance when compared to the static ETM (\ref{eq:staticlin})
whose performance are approached when $\theta \rightarrow +\infty$. By tuning $\sigma$ and $\theta$, one can approach arbitrarily close 
the performance of the ``ideal'' closed loop system (\ref{eq:clin}).

\subsubsection{A possible choice of parameters}
\label{choice}

Propositions \ref{pro:tau}, \ref{pro:param1} and \ref{pro:param2} suggest a strategy for the choice of parameters.
Firstly, we tune the decay rate of the Lyapunov function according to Proposition \ref{pro:param1} by choosing
$\eta_0=0$ and $\lambda=(1-\sigma)\kappa$; the value  of $\sigma$ determines the degradation of 
the decay rate of the Lyapunov function with respect to the ``ideal'' closed loop system.

Then, it remains to choose the parameter $\theta\in  \R_0^+$.
Let us remark that $a$ is an upper bound for the spectral radius $\rho(A+BK)$.
Then (\ref{eq:lyaplin}) implies that we have $\kappa \le  2\rho(A+BK) \le 2 a$. Hence, $\lambda =(1-\sigma)\kappa \le 2a$
and we are in either in the second or in the third case of Proposition \ref{pro:tau}.
Then, the best lower bound on the minimum inter-execution time is obtained for $\theta \in [0, 1/(2a-\lambda)]$.
Proposition \ref{pro:param2} suggests to choose $\theta$ as large as possible in order to minimize the degredation 
of the  performance index.
Thus, it seems reasonable to choose $\theta=1/(2a-\lambda)$.

\section{Example}
\label{sec:example}

To evaluate our approach, we consider the example introduced in~\cite{tabuada2007} of the form (\ref{eq:lin}) with
$
A=
\left[
\begin{smallmatrix}
0 & 1 \\
-2 & 3
\end{smallmatrix}
\right], \;
B=
\left[
\begin{smallmatrix}
0  \\
1
\end{smallmatrix}
\right].
$
A stabilizing controller is given by the control gain $K=\left[ \begin{smallmatrix} 1 & -4 \end{smallmatrix} \right]$ with an associated Lyapunov function
$V(x)=x^\top P x$ satisfying (\ref{eq:lyaplin}) with
$
P=
\left[
\begin{smallmatrix}
1 & 0.25 \\
0.25 & 1
\end{smallmatrix}
\right], \;
Q=
\left[
\begin{smallmatrix}
0.5 & 0.25 \\
0.25 & 1.5
\end{smallmatrix}
\right].
$
We have implemented the event-triggered control schemes given by the static and dynamic ETM (\ref{eq:staticlin}) and
(\ref{eq:dynamiclin}). We have used several values of parameter $\sigma$, for the dynamic ETM, we 
chose $\eta_0=0$, $\lambda=(1-\sigma)\kappa $ with $\kappa=0.48$, and experimented several values of $\theta$. 
We have also implemented the following ETM taken from~\cite{mazo2010}:
\begin{equation}
\label{eq:etm}
\begin{array}{l}
t_0=0,\\
t_{i+1} = \inf\left\{\begin{array}{l} t\in \R |\; t>t_i \land \\ 
V(x(t))\ge e^{(\sigma-1)\kappa(t-t_i)}V(x(t_i))  \end{array} \right\}
\end{array}
\end{equation}
It can be shown that using this ETM ensures that the function $V(x(t))$ satisfies (\ref{eq:decaylin}) and that
the quadratic integral cost $J(x(0))$ satisfies (\ref{eq:perfdynamiclin}) with $\theta=0$.
Also, it is not hard to show that the minimum inter-execution time for this ETM is always larger than that of (\ref{eq:staticlin}) and
(\ref{eq:dynamiclin}). For a given value of $\sigma$, the closed loop systems with these three ETM have the same guarantees on the decay rate and are thus comparable.

On Figure~\ref{fig}, we have represented, for $\sigma=0.1$ and initial condition 
$x(0)=\left[\; 10 \; \; 0\; \right]^\top$,  the evolution of the functions $V(x(t)$ and $W(x(t),\eta(t))$ using a static ETM (\ref{eq:staticlin}), 
dynamic ETM (\ref{eq:dynamiclin}) with $\theta=1/(2a-\lambda)$ and ETM (\ref{eq:etm}).
It can be seen that the inter-execution time is significantly larger using the   ETM (\ref{eq:dynamiclin}) and  (\ref{eq:etm}).
Also, it can be seen that ETM (\ref{eq:etm}) produces large variations of the function $V(x(t))$ while these variations are much smaller for the dynamic ETM (\ref{eq:dynamiclin}), when
$V(x(t))$ is monotonically decreasing for the static ETM (\ref{eq:staticlin}). 
Hence, it seems that ETM (\ref{eq:dynamiclin}) achieves a good compromise maximizing the inter-execution time while minimizing the variations of $V(x(t))$.

\begin{figure}[!h]
\begin{tabular}{ccc}
\hspace{-0.6cm}
\includegraphics[width=0.37\columnwidth]{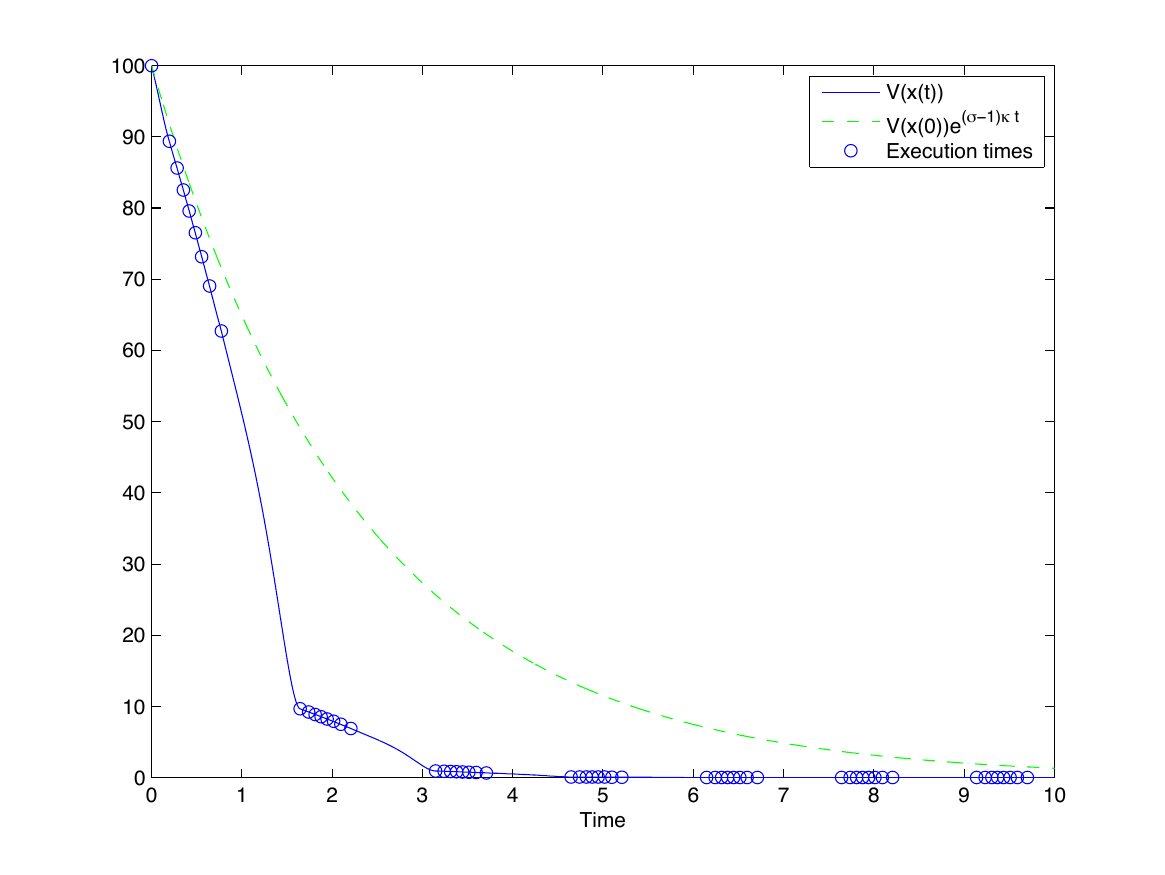} & 
\hspace{-0.8cm}
\includegraphics[width=0.37\columnwidth]{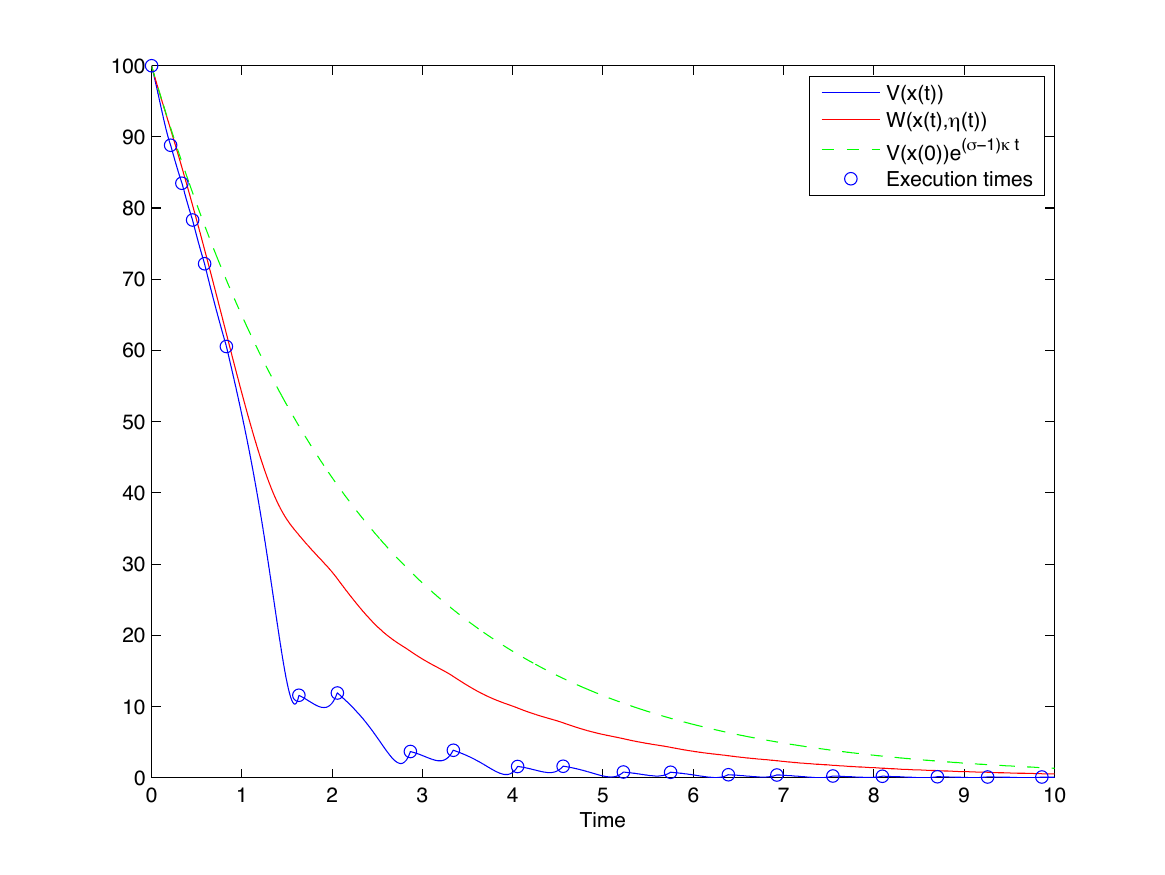}& 
\hspace{-0.8cm}
\includegraphics[width=0.37\columnwidth]{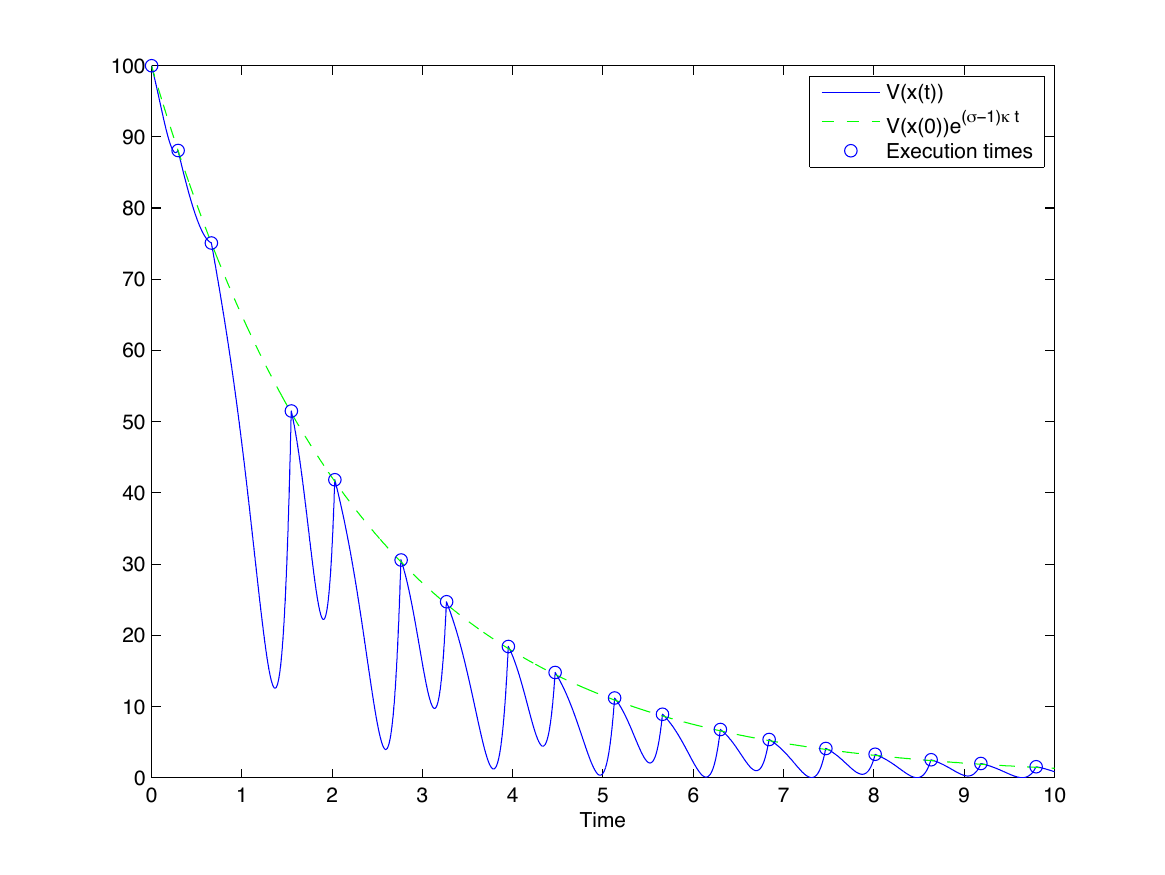}\\
\hspace{-0.6cm} {\footnotesize Static ETM (\ref{eq:staticlin}) }&\hspace{-0.8cm} {\footnotesize Dynamic ETM (\ref{eq:dynamiclin})} &\hspace{-0.8cm} {\footnotesize  ETM (\ref{eq:etm})}
\end{tabular}
\caption{Evolution of the functions $V(x(t)$ and $W(x(t),\eta(t))$ using ETM (\ref{eq:staticlin}),  (\ref{eq:dynamiclin}) and (\ref{eq:etm}) for $\sigma=0.1$, $\lambda=(1-\sigma)\kappa$,
$\eta_0=0$, $\theta=1/(2a-\lambda)$ and initial condition 
$x(0)=\left[\; 10 \; \; 0\; \right]^\top$.}  
\label{fig}                                 
\end{figure}

%

\begin{table}[!t]
\begin{center}
\begin{footnotesize}
\begin{tabular}{|l|c|c|c|}
\multicolumn{4}{c}{{\bf Mean value of inter-execution time}} \\
\hline
                     & $\sigma=0.001$ & $\sigma=0.01$ & $\sigma=0.1$ \\
\hline
ETM (\ref{eq:staticlin})               & 0.003s  &  0.025s  &  0.178s \\
\hline
ETM (\ref{eq:dynamiclin}) ($\theta=0$)         & 0.127s  &  0.452s  &  0.581s \\
ETM (\ref{eq:dynamiclin}) ($\theta=0.01$) & 0.144s  &  0.470s  &  0.579s \\
ETM (\ref{eq:dynamiclin}) ($\theta=1$)    & 0.152s  &  0.410s  &  0.551s \\
ETM (\ref{eq:dynamiclin}) ($\theta=100$)   & 0.068s  &  0.202s  &  0.424s \\
\hline
ETM (\ref{eq:etm}) & 0.588s & 0.591s & 0.590s \\
\hline
\multicolumn{4}{c}{} \\
\multicolumn{4}{c}{{\bf Coefficient of variation of inter-execution time} } \\
\hline
                     & $\sigma=0.001$ & $\sigma=0.01$ & $\sigma=0.1$ \\
\hline
ETM (\ref{eq:staticlin})                & 49.38 &   5.92  &  0.86 \\
\hline
ETM (\ref{eq:dynamiclin}) ($\theta=0$)  & 0.32  &  0.09 &   0.07 \\
ETM (\ref{eq:dynamiclin}) ($\theta=0.01$) & 0.37 &    0.11 &     0.09 \\
ETM (\ref{eq:dynamiclin}) ($\theta=1$) & 0.92 &   0.34 &   0.26 \\
ETM (\ref{eq:dynamiclin}) ($\theta=100$)   & 2.21 &   0.74 &   0.36 \\
\hline
ETM (\ref{eq:etm}) & 0.12 &   0.15 &   0.14  \\
\hline
\multicolumn{4}{c}{} \\
\multicolumn{4}{c}{{\bf Mean value of performance index $\Delta J(x(0))$}} \\
\hline
                     & $\sigma=0.001$ & $\sigma=0.01$ & $\sigma=0.1$ \\
\hline
ETM (\ref{eq:staticlin})                & 0.54 &    0.54 &  0.55 \\
\hline
ETM (\ref{eq:dynamiclin}) ($\theta=0$)  & 0.87   & 0.87  &  0.94 \\
ETM (\ref{eq:dynamiclin}) ($\theta=0.01$) & 0.83 &  0.84 & 0.91 \\
ETM (\ref{eq:dynamiclin}) ($\theta=1$) & 0.57 &   0.57 &    0.58 \\
ETM (\ref{eq:dynamiclin}) ($\theta=100$)   & 0.54 &  0.54 & 0.55 \\
\hline
ETM (\ref{eq:etm}) & 1.44 &  1.46 &  1.58 \\
\hline
\multicolumn{4}{c}{} \\
\end{tabular}
\end{footnotesize}
\vspace{-0cm}
\caption{Mean value, variability of inter-execution times, and mean value of performance index.}  
\label{tab}                                 
\end{center}                                 
\end{table}

\begin{figure}[!t]
\vspace{-0.5cm}
\begin{center}
\includegraphics[width=0.6\columnwidth]{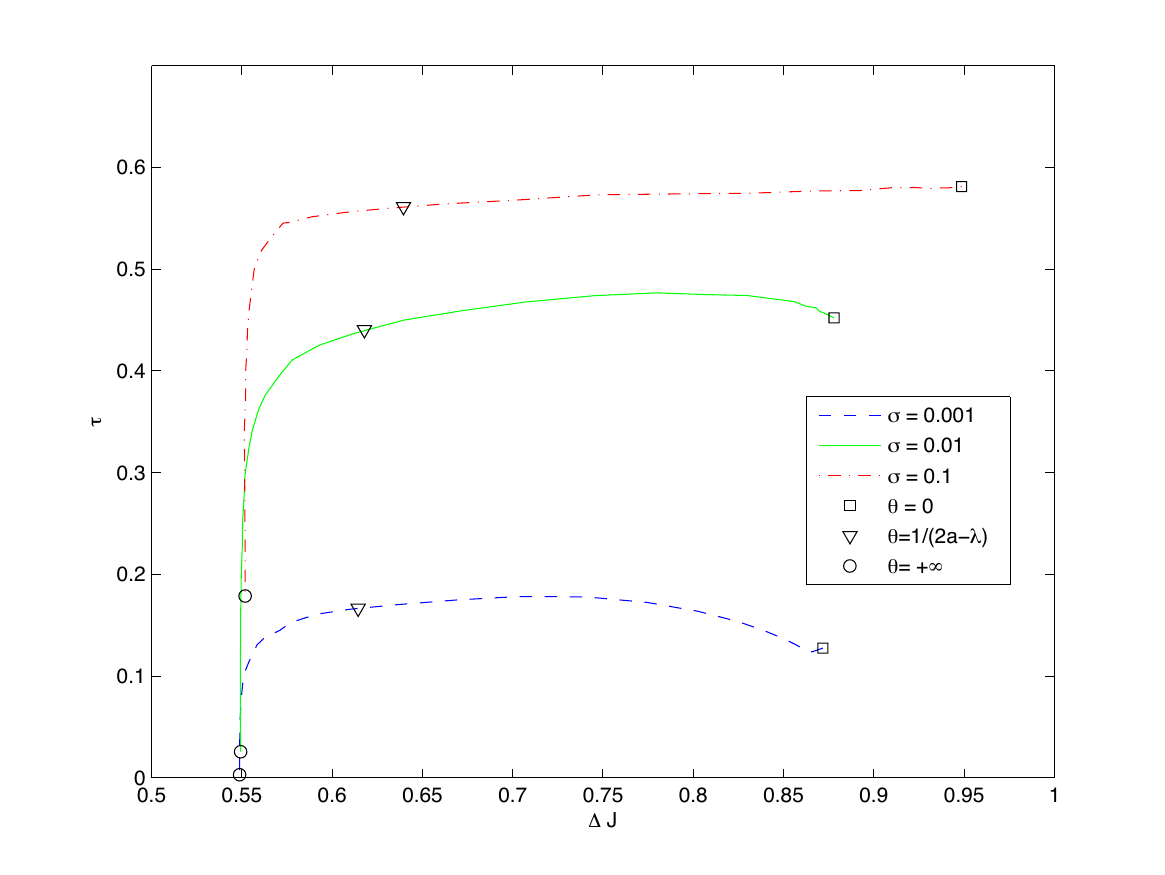} 
\caption{Mean values of the inter-execution time and of the performance index obtained using
the dynamic ETM (\ref{eq:dynamiclin}) for several values of parameters $\sigma$ and $\theta$ with $\lambda=(1-\sigma)\kappa$,
$\eta_0=0$.}  
\label{fig1}                                 
\end{center}    
\vspace{-0.5cm}                             
\end{figure}

We also ran simulations for several parameter values and initial values given by
$
x(0)=\left[\begin{smallmatrix} 10\cos\left(\frac{2\pi}{60}i\right) & 10\sin\left(\frac{2\pi}{60}i\right) \end{smallmatrix} \right]^\top, \;
i=1,\dots 60.
$
We ran the simulations on a frame of $10$s. The mean values and the coefficients of variability (ratio between the standard deviation and the mean value) of  inter-execution times
as well as the mean value of the normalized performance index $\Delta J(x(0))=J(x(0))/V(x(0))$ are reported in Table~\ref{tab}. 
It can be seen that the use of dynamic ETM (\ref{eq:dynamiclin})
results in significantly larger inter-execution times in average than the static ETM. The gain is considerable for small values of $\sigma$. For $\sigma=0.1$,
it allows us to increase the inter-execution times by a factor between $2$ and $3$.
It is noticeable that in all cases the largest inter-execution time is achieved 
by the ETM (\ref{eq:etm}) and that the parameter $\sigma$ has little influence here.
Also, looking at the coefficient of variability, it appears that 
the use of 
ETM (\ref{eq:dynamiclin}) or (\ref{eq:etm}) reduces significantly the variability of the inter-execution times and thus renders the behavior of the system more predictable. 
The performance index $\Delta J(x(0))$ is quite similar for the dynamic ETM (\ref{eq:dynamiclin}) with larger value of $\theta$ and the static ETM (\ref{eq:staticlin}), while it is not so good for the ETM (\ref{eq:etm}). 
Hence, it appears that the main advantage of the dynamic ETM (\ref{eq:dynamic}) is that it allows to find a compromise between a the inter-execution times and the performance index. This is corroborated by Figure~\ref{fig1}, where we represented the mean values of the  inter-execution time and of the performance index obtained using
the dynamic ETM (\ref{eq:dynamiclin}).
Each curve corresponds to a given value of $\sigma$ with parameter $\theta$ ranging from $0$ to $+\infty$ (which corresponds to the static ETM (\ref{eq:staticlin})).
The three curves are roughly made of two branches. 
On the first branch (starting from $\theta=0$), the inter-execution time remains roughly constant while the performance index improves as $\theta$ increases.
On the second branch (ending at $\theta=+\infty$), the performance index remains constant while the inter-execution time gets smaller as $\theta$ increases.
Then, it seems that there is an optimal value for parameter $\theta$ which corresponds to the intersection of the two branches.
Let us remark that the value $\theta=1/(2a-\lambda)$ suggested in section \ref{choice} is quite close to the optimal value.

\section{Conclusion}

In this paper, we have presented a new class of dynamic ETM for event-triggered control systems.  For nonlinear systems,
we have proved the stability of the resulting closed loop system. 
Further results have been shown for linear systems which give some insight on how the parameters of the ETM can be chosen.
This paper  has several potential applications. Indeed, similar schemes could certainly be used in the contexts of decentralized~\cite{mazo2011}, output-based~\cite{donkers2012}, or periodic~\cite{heemels2013} event-triggered control systems.
Also, dynamic ETM could be used to derive new algorithms for self-triggered control systems~\cite{wang2009,anta2010,mazo2010}.

\subsubsection*{Acknowledgements} The author would like to thank Romain Postoyan for valuable comments on a preliminary version of this paper.

\bibliographystyle{plain}
\bibliography{event}

\section*{Appendix}

\subsection{Proof of Proposition \ref{pro:tau}}

\begin{proof}
Let us start by remarking that, 
\begin{eqnarray*}
\sigma x^\top Q x-2 x^\top P BK e & \ge & \sigma q \|x\|^2 - p \|x\| \|e\| \\
& \ge & \sigma q \|x\|^2 - \frac{p}{2} \left(\frac{\sigma q}{p} \|x\|^2 + \frac{p}{\sigma q}  \|e\|^2 \right) \\
& \ge & \frac{\sigma q}{2} \|x\|^2 -  \frac{p^2}{2\sigma q}  \|e\|^2 = q'  \|x\|^2 -  p'  \|e\|^2
\end{eqnarray*}
where $q'={\sigma q}/{2}$ and $p'={p^2}/{2\sigma q}$. 
Let us also remark that a lower bound the inter-execution time using the static ETM (\ref{eq:staticlin})
is given by the time it takes for the function $\frac{\sqrt{p'}\|e\|}{\sqrt{q'}\|x\|}=\frac{p\|e\|}{\sigma q\|x\|}$ to go from $0$ to $1$.
In~\cite{tabuada2007}, it is shown that this time is at least
$$
\int_0^1 \frac{1}{a\frac{p}{\sigma q}+(a+b)s+b\frac{\sigma q}{p}s^2} ds.
$$
Then, following the discussion in Remark~\ref{remark:dyn}, it follows that this also a valid lower bound for the inter-execution
time using the dynamic ETM (\ref{eq:dynamiclin}).
Hence, the first case of the proposition is proved and we can assume in the following that $a>\lambda/2$.
Let us assume for the moment that $\theta>0$.
We have
$$
 \eta+\theta(\sigma x^\top Q x-2 x^\top P BK e) \ge \eta +  \theta(q'  \|x\|^2 -  p'  \|e\|^2).
$$
Hence, it follows that a lower bound on the inter-execution time is given by the time it takes for the function
$$
\psi = \frac{\sqrt{\theta p'} \|e\|}{\sqrt{\eta+\theta q' \|x\|^2}}  
$$
to go from $0$ to $1$. 
We have
\begin{eqnarray*}
\dot \psi & = & \frac{\sqrt{\theta p'} e^\top \dot e}{\|e\| \sqrt{\eta+\theta q' \|x\|^2}} - \frac{\sqrt{\theta p'} \|e\|}{2(\eta+\theta q' \|x\|^2)^{3/2}} (\dot \eta + 2 \theta q' x^\top \dot x)
\end{eqnarray*}
By remarking that $\dot e=-\dot x$,  $\|\dot x \| \le a \|x\| + b \|e\|$
and $\dot \eta \ge -\lambda \eta +q'  \|x\|^2 -  p'  \|e\|^2$, it follows that
\begin{eqnarray}
\nonumber
\dot \psi & \le & \frac{\sqrt{\theta p'}}{\sqrt{\eta+\theta q' \|x\|^2}}(a \|x\| + b \|e\|)\\
\nonumber &&  + \frac{\sqrt{\theta p'} \|e\|}{2(\eta+\theta q' \|x\|^2)^{3/2}} ( \lambda \eta -q'  \|x\|^2 + p'  \|e\|^2+ 
2 \theta q' a \|x\|^2 +2 \theta q' b \|x\| \|e\|) \\
\nonumber & \le & a \sqrt{p'/q'} +b \psi + \frac{1}{2\theta} \psi^3 + b\sqrt{q'/p'} \psi^2  \\
\nonumber && + \frac{\sqrt{\theta p'} \|e\|}{2(\eta+\theta q' \|x\|^2)^{3/2}} ( \lambda \eta -q'  \|x\|^2 + 
2 \theta q' a \|x\|^2 ) \\
\nonumber & \le & a \sqrt{p'/q'} +b \psi + \frac{1}{2\theta} \psi^3 + b\sqrt{q'/p'} \psi^2 + \frac{\lambda}{2} \psi \\
\label{eq:psi} &&+  \frac{\theta q'\|x\|^2}{2(\eta+\theta q' \|x\|^2)} ( -\lambda  - 1/\theta   + 
2   a) \psi
\end{eqnarray}
Hence, if $\theta\le 1/(2a-\lambda)$ (second case of the proposition) then
$$
\dot \psi \le  a \frac{p}{\sigma q} + (\frac{\lambda}{2} +b) \psi  + b\frac{\sigma q}{p} \psi^2 +  \frac{1}{2\theta} \psi^3 
$$
Then, by the Comparison Lemma, it follows that the time needed by $\psi$ to go from $0$ to $1$ is at least
$$
\int_0^1 \frac{1}{\frac{p}{\sigma q} + (\frac{\lambda}{2} +b) s + b\frac{\sigma q}{p} s^2 +  \frac{1}{2\theta} s^3 } ds.
$$ 
It can be seen that this integral is increasing with respect to $\theta\in (0,1/(2a-\lambda)]$. 
For $\theta=1/(2a-\lambda)$, this integral coincides with that given in (\ref{eq:tau2}).
Then, following the discussion in Remark~\ref{remark:dyn}, we have that (\ref{eq:tau2}) provides also 
a valid lower bound for the minimum inter-execution 
for all $\theta\in [0,1/(2a-\lambda)]$. Hence, the second case of the proposition is proved.
If $\theta > 1/(2a-\lambda)$ (third case of the proposition) then it follows from (\ref{eq:psi}) that
$$
\dot \psi \le  a \frac{p}{\sigma q} + (a+b) \psi  + b\frac{\sigma q}{p} +  \frac{1}{2\theta} ( \psi^3-\psi) 
$$
Then, by the Comparison Lemma, it follows that a lower bound on the time needed by $\psi$ to go from $0$ to $1$ is given
by (\ref{eq:tau3}).
\end{proof}

\subsection{Proof of Proposition \ref{pro:param1}}

\begin{proof}
It follows from $Q\ge \kappa P$ and
(\ref{eq:diffWlin}) that for all $t\in \R_0^+$,
$$
\frac{d}{dt}W(x(t),\eta(t)) \le (\sigma-1)\kappa W(x(t),\eta(t)).
$$
Then, for all $t\in \R_0^+$,
\begin{equation}
\label{eq:W}
V(x(t))\le W(x(t),\eta(t)) \le W(x(0),\eta(0))e^{(\sigma-1)\kappa t} =   V (x(0))e^{(\sigma-1)\kappa t}
\end{equation}
\end{proof}

\subsection{Proof of Proposition \ref{pro:param2}}                                  
\begin{proof}
If $\theta\ne 0$, then it follows from (\ref{eq:dlyaplin}) and 
Remark \ref{remark:dyn} that for all $t\in \R_0^+$,
$$
\frac{d}{dt}V(x(t)) \le (\sigma-1)x(t)^\top Q x(t) + \frac{1}{\theta} \eta(t).
$$
Then, equation (\ref{eq:W}) yields
\begin{eqnarray*}
\frac{d}{dt}V(x(t)) & \le & (\sigma-1)x(t)^\top Q x(t) + \frac{1}{\theta} \left(V(x(0)) e^{(\sigma-1)\kappa t}-V(x(t)) \right)\\
& \le & -\left(\frac{1/\chi+\theta (1-\sigma)}{\theta} \right) x(t)^\top Q x(t) + \frac{1}{\theta} V(x(0)) e^{(\sigma-1)\kappa t}.
\end{eqnarray*}
Then, integrating on both sides of the inequality yields
$$
-V(x(0)) \le -\left(\frac{1/\chi+\theta (1-\sigma)}{\theta} \right) J(x(0)) + \frac{1}{\theta\kappa(1-\sigma)} V(x(0))
$$
which is equivalent to (\ref{eq:perfdynamiclin}).
If $\theta=0$, then (\ref{eq:decaylin}) gives 
$
x(t)^\top Q x(t) \le \chi V(x(t)) \le \chi V (x(0))e^{(\sigma-1)\kappa t}.
$
Then, integrating on both sides of the inequality yields
\begin{equation}
\label{eq:perfdynamiclin0}
J(x(0)) \le \frac{\chi}{\kappa (1-\sigma)} V(x(0))
\end{equation}
which coincides with  (\ref{eq:perfdynamiclin}) for $\theta=0$.
\end{proof}
                                   
\end{document}